\tikzset{
  nomorepostactions/.code={\let\tikz@postactions=\pgfutil@empty},
  mymark/.style 2 args={decoration={markings,
    mark= between positions 0 and 1 step (1/11)*\pgfdecoratedpathlength with{%
        \tikzset{#2,every mark}\tikz@options
        \pgfuseplotmark{#1}%
      },  
    },
    postaction={decorate},
    /pgfplots/legend image post style={
        mark=#1,mark options={#2},every path/.append style={nomorepostactions}
    },
  },
}
\pgfplotsset{compat=newest}
\newtheorem{theorem}{Theorem}
\newtheorem{corollary}[theorem]{Corollary}
\newtheorem{assumption}[theorem]{Assumption}
\newcommand{\SOS}{\mathrm{SOS}}
\newcommand\copyrighttext{%
  \footnotesize 
  \textcopyright This version has been accepted for publication in the European Journal of Control, 2025. Personal use of this material is permitted. Permission from EUCA must be obtained for all other uses, in any current or future media, including reprinting/republishing this material for advertising or promotional purposes, creating new collective works, for resale or redistribution to servers or lists, or reuse of any copyrighted component of this work in other works.
}
\newcommand\copyrightnotice{%
    \begin{tikzpicture}[remember picture,overlay]
        \node[anchor=south,yshift=10pt] at (current page.south) {\fbox{\parbox{\dimexpr\textwidth-\fboxsep-\fboxrule\relax}{\copyrighttext}}};
    \end{tikzpicture}%
}
\title{\LARGE \bf
Koopman-based control using sum-of-squares optimization:\\Improved stability guarantees and data efficiency
}
\author{Robin Str\"asser, Julian Berberich, Frank Allg\"ower
\thanks{F.\ Allgöwer is thankful that this work was funded by the Deutsche Forschungsgemeinschaft (DFG, German Research Foundation) under Germany's Excellence Strategy -- EXC 2075 -- 390740016 and within grant AL 316/15-1 -- 468094890. 
R.\ Strässer thanks the Graduate Academy of the SC SimTech for its support.}
\thanks{Robin Str\"asser, Julian Berberich, and Frank Allg\"ower are with the Institute for Systems Theory and Automatic Control, University of Stuttgart, 70550 Stuttgart, Germany
		(email:\{robin.straesser, julian.berberich, frank.allgower\}@ist.uni-stuttgart.de).}%
}
\begin{document} 
 
\maketitle
\thispagestyle{empty}
\pagestyle{empty}


\begin{abstract}%
    In this paper, we propose a novel controller design approach for unknown nonlinear systems using the Koopman operator.
    In particular, we use the recently proposed stability- and feedback-oriented extended dynamic mode decomposition (SafEDMD) architecture to generate a data-driven bilinear surrogate model with certified error bounds. 
    Then, by accounting for the obtained error bounds in a controller design based on the bilinear system, one can guarantee closed-loop stability for the true nonlinear system. 
    While existing approaches over-approximate the bilinearity of the surrogate model, thus introducing conservatism and providing only local guarantees, we explicitly account for the bilinearity by using sum-of-squares (SOS) optimization in the controller design.
    More precisely, we parametrize a rational controller stabilizing the error-affected bilinear surrogate model and, consequently, the underlying nonlinear system. 
    The resulting SOS optimization problem provides explicit data-driven controller design conditions for unknown nonlinear systems based on semidefinite programming.
    Our approach significantly reduces conservatism by establishing a larger region of attraction and improved data efficiency. 
    The proposed method is evaluated using numerical examples, demonstrating its advantages over existing approaches.
\end{abstract}

\begin{keywords}
    Data-driven control, Koopman operator, sum-of-squares optimization, nonlinear systems
\end{keywords}

\copyrightnotice\vspace*{-\baselineskip}
%
\section{Introduction}\label{sec:introduction}
Nonlinear control systems pose significant challenges in system analysis and controller design due to their complex behaviors~\cite{khalil:2002}.
Traditional model-based control methods may struggle with these systems due to their inherent nonlinearity, high-dimensional dynamics, and the difficulty in obtaining accurate mathematical models.
Recently, data-driven methods have become more important as they leverage available data to model and control nonlinear systems without relying solely on predefined mathematical models~\cite{martin:schon:allgower:2023b}.
These methods help overcome the limitations of traditional control techniques, allowing for more flexible and reliable control.

One effective data-driven strategy for controlling nonlinear systems is based on the Koopman operator~\cite{koopman:1931}.
The Koopman operator is a mathematical tool that allows for representing the nonlinear dynamics of a system in a higher-dimensional space, where its behavior can be analyzed using linear methods~\cite{mauroy:mezic:susuki:2020}. 
This approach has gained attention in recent years for its ability to analyze and design controllers for nonlinear systems via (bi-)linear systems theory~\cite{budisic:mohr:mezic:2012}.
Considered applications areas include fluid dynamics~\cite{mezic:2013}, robotics~\cite{bruder:remy:vasudevan:2019}, biological systems~\cite{golany:radinstky:freedman:minha:2021}, and cryptography~\cite{strasser:schlor:allgower:2023}, where traditional methods are often infeasible due to the high complexity of the systems.
Since the Koopman operator is inherently infinite dimensional, a common approach is to approximate its action on a finite-dimensional space of observable functions using techniques like extended dynamic mode decomposition (EDMD)~\cite{williams:kevrekidis:rowley:2015,proctor:brunton:kutz:2018}.

In practice, the approximation error of the EDMD-based surrogate model is often neglected. 
Despite this simplification, successful results have been achieved in many applications~\cite{bevanda:sosnowski:hirche:2021}. 
However, without rigorous analysis of the approximation error, no theoretical guarantees can be given on stability, performance, and robustness. 
Hence, it is important to account for the approximation error explicitly to ensure the desired behavior of the designed controller in closed loop with the true nonlinear system.
There exist some first results addressing the approximation error in finite-dimensional representations of the Koopman operator, e.g., via integral quadratic constraints~\cite{eyuboglu:karimi:2024,eyuboglu:strasser:allgower:karimi:2025} or the stability and feedback-oriented EDMD (SafEDMD) framework~\cite{strasser:berberich:allgower:2023a,strasser:schaller:worthmann:berberich:allgower:2025,strasser:schaller:worthmann:berberich:allgower:2024b}.
The latter provides a data-driven bilinear surrogate model with guaranteed error bounds that can be used in both continuous and discrete time, and there are preliminary extensions to the presence of stochastic noise~\cite{chatzikiriakos:strasser:iannelli:allgower:2024} and the combination with model predictive control~\cite{worthmann:strasser:schaller:berberich:allgower:2024}. 
While these approaches provide a starting point for rigorous Koopman-based control, the controller design relies on over-estimating the bilinearity of the surrogate model and the approximation error to obtain an uncertain linear representation.
Hence, the resulting controller tends to be conservative, i.e., the resulting guaranteed region of attraction (RoA) can be small and the approaches suffer from poor robustness.

In this paper, we propose a novel approach to design controllers for unknown nonlinear systems based on the Koopman operator and sum-of-squares optimization (SOS).
Here, we first apply SafEDMD~\cite{strasser:schaller:worthmann:berberich:allgower:2024b} to derive a data-driven bilinear surrogate model with guaranteed error certificates.
Then, we design a rational controller globally stabilizing the error-affected bilinear surrogate model and, thus, the true underlying system.
To this end, we formulate a novel SOS program providing a convex controller design for unknown nonlinear systems based on data samples.
Compared to existing Koopman-based control methods with stability guarantees, which over-approximate the bilinearity of the surrogate model, our approach explicitly accounts for the bilinearity.
Thus, our method significantly reduces the conservatism over these approaches, leading to a larger RoA and a more efficient usage of data, i.e., less data is sufficient for a feasible controller design.
Further, we showcase the benefits of the proposed approach compared to existing methods in numerical examples.
We note that the existing methods in~\cite{nobuyama:aoyagi:kami:2011,vatani:hovd:olaru:2014} also used SOS optimization to control bilinear systems. 
However, these approaches only address nominal bilinear systems, whereas we propose a robust controller design which can cope with the uncertainty due to the Koopman operator approximation.

The paper is organized as follows. 
In Section~\ref{sec:problem-setting}, we introduce the considered class of nonlinear systems and the corresponding data-driven system representation based on SafEDMD.
Then, in Section~\ref{sec:controller-design-bilinear}, we first present the proposed SOS-based controller design for the special case of bilinear systems with known system dynamics to ease the exposition.
The general case of data-driven control for unknown nonlinear systems is addressed in Section~\ref{sec:controller-design-nonlinear}.
Finally, we illustrate the proposed approach in numerical examples in Section~\ref{sec:numerical-example}, before concluding the paper in Section~\ref{sec:conclusion}.

\emph{Notation:}
We write $I_p$ for the $p\times p$ identity matrix and $0_{p\times q}$ for the $p\times q$ zero matrix, where we omit the index if the dimension is clear from the context. 
A vector of ones is denoted by $\mathbb{1}_d\in\bbR^d$. 
For symmetric $A$, we write $A \succ 0$ or $A \succeq 0$ if $A$ is positive definite or positive semidefinite, respectively. 
Negative (semi)definiteness is defined analogously. 
Matrix blocks which can be inferred from symmetry are denoted by $\star$ and we abbreviate $B^\top AB$ by writing $[\star]^\top AB$.
The Kronecker product is denoted by $\kron$. 
Further, we denote by $\bbR[x,d]$ the set of all polynomials $s$ in the variable $x\in\bbR^n$ with degree $d$ and real coefficients, i.e., $s(x) = \sum_{\alpha\in\bbN_0^n, |\alpha|\leq d} s_\alpha x^\alpha$ with $s_\alpha\in\bbR$ and $|\alpha|\leq d\in\bbN_0$.
Here, $x^\alpha = x_1^{\alpha_1}\cdots x_n^{\alpha_n}$ are monomials for a vectorial index $\alpha\in\bbN_0^n$, where $|\alpha|= \alpha_1 + \cdots + \alpha_n$.
The set of all $p\times q$-matrices with elements in $\bbR[x,d]$ is denoted by $\bbR[x,d]^{p\times q}$.
If a matrix $S\in\bbR[x,2d]^{p\times p}$ can be decomposed as $S = T^\top T$ for some $T\in\bbR[x,d]^{q\times p}$, then $S$ is called an SOS matrix in $x$ and we write $S\in\SOS[x,2d]^p$.
Finally, we say $S\in\bbR[x,2d]^{p\times p}$ is strictly SOS if $(S-\varepsilon I_p)\in\SOS[x,2d]^p$ for some $\varepsilon>0$, denoted by $S\in\SOS_+[x,2d]^p$. 
%
\section{Problem setting}\label{sec:problem-setting} 
We consider \emph{unknown} nonlinear control-affine systems of the form
\begin{equation}\label{eq:dynamics-nonlinear}
    \dot{x}(t) 
    = f_c(x(t),u(t)) 
    = f(x(t)) + \sum_{i=1}^mg_i(x(t)) u_i(t)
\end{equation}
with state $x\in\bbR^n$, input $u\in\bbR^m$, initial condition $x(0)\in\bbR^n$, and unknown functions $f$ and $g_i$, $i \in 1,...,m$.
Further, we assume that $f(0)=0$ holds, i.e., that the origin is a controlled equilibrium for $u=0$.
\begin{assumption}\label{ass:continuity-dynamics}
    The vector field $f_c:\bbR^n\times\bbR^m\to\bbR^n$ is continuous and locally Lipschitz in its first argument around the origin.
\end{assumption}
Assumption~\ref{ass:continuity-dynamics} is satisfied for many practical systems and ensures a smooth behavior between sampled data points.
The goal of this paper is to design a state-feedback controller $u=\mu(x)$ for the unknown nonlinear system~\eqref{eq:dynamics-nonlinear} that is guaranteed to stabilize the origin of the system.
To this end, we first find a suitable data-driven system representation, for which we leverage Koopman operator theory and its finite-dimensional approximation. 
More precisely, we rely on SafEDMD~\cite{strasser:schaller:worthmann:berberich:allgower:2024b} which uses a lifted bilinear representation of system~\eqref{eq:dynamics-nonlinear} with the lifted state $\Phi(x) = \begin{bmatrix} x^\top & \phi_{n+1} & \cdots & \phi_N\end{bmatrix}^\top$ and the corresponding (discrete-time) dynamics
\begin{equation}\label{eq:dynamics-nonlinear-bilinear}%
    \Phi(x_{k+1}) = A\Phi(x_k) + B_0 u_k + \tB (u_k\kron \Phi(x_k)) + r(x_k,u_k)
\end{equation}
with $x_{k} = x(k\Delta t)$ for some sampling rate $\Delta t>0$ and $k\in\bbN_0$.
Here, the system matrices $A,B_0,\tB$ are estimated by data samples from repeated experiments with different constant inputs $u(t)\equiv\bar{u}$, i.e., we consider the dataset 
\begin{equation}\label{eq:dataset}
    \cD=\left\{
        \{x^{\bar{u}}(t_j),x^{\bar{u}}(t_j+\Delta t)\}_{j=1}^{d} 
        \;\text{for}\; 
        \bar{u}\in\{0,e_1,...,e_m\}
    \right\}
\end{equation} 
with unit vectors $e_i\in\bbR^m$. 
Based on the data $\cD$, we define 
\begin{align*}
    \Phi_X^{\bar{u}} &= \begin{bmatrix}\Phi(x^{\bar{u}}(t_1)) & \cdots & \Phi(x^{\bar{u}}(t_d))\end{bmatrix},\\
    \Phi_{X_+}^{\bar{u}} &= \begin{bmatrix}\Phi(x^{\bar{u}}(t_1+\Delta t)) & \cdots & \Phi(x^{\bar{u}}(t_d+\Delta t))\end{bmatrix},
\end{align*}
and solve the linear regression problems 
\begin{subequations}\label{eq:EDMD-linear-regression}
    \begin{align}
        A 
        &= \argmin_{A\in\bbR^{N\times N}} 
        \| \Phi_{X_+}^{0} - A \Phi_X^{0} \|_\mathrm{F},
        \\
        (B_{0,i},B_i) 
        &= \argmin_{\substack{B_{0,i}\in\bbR^N,\\B_i\in\bbR^{N\times N}}} 
        \left\| \Phi_{X_+}^{e_i} 
            - \begin{bmatrix} B_{0,i} & B_i \end{bmatrix} 
            \begin{bmatrix} \mathbb{1}_d^\top \\ \Phi_X^{e_i} \end{bmatrix}
        \right\|_\mathrm{F}.
    \end{align}
\end{subequations}
Then, we stack the matrices $B_{0,i}$, $B_i$ and define $B_0=\begin{bmatrix}B_{0,1} & \cdots & B_{0,m}\end{bmatrix}$ and $\tB = \begin{bmatrix} B_1 - A & \cdots & B_m - A \end{bmatrix}$.
Further, the residual can be proven to satisfy the proportional bound
\begin{equation}\label{eq:proportional-bound-nonlinear}
    \|r(x,u)\| \leq c_x \|\Phi(x)\| + c_u \|u\|
\end{equation}
for some $c_x,c_u>0$, see~\cite[Cor.~3.2]{strasser:schaller:worthmann:berberich:allgower:2024b} for details.
%
\section{SOS-based controller design for bilinear systems}\label{sec:controller-design-bilinear}
Next, we propose a rational state-feedback controller based on SOS optimization. 
To this end, we first restrict ourselves to discrete-time bilinear systems, i.e., system~\eqref{eq:dynamics-nonlinear-bilinear} with $\Phi(x)=x$ leading to 
\begin{equation}\label{eq:dynamics-bilinear}
    x_+ = A x + B_0 u + \tB (u\kron x) + r(x,u),
\end{equation}
where for brevity we write $(x_+,x,u)$ for $(x_{k+1},x_k,u_k)$. 
In this section, we assume that the system matrices $A,B_0,\tB$ are known and that the proportional bound
\begin{equation}\label{eq:proportional-bound-bilinear}
    \|r(x,u)\| \leq c_x \|x\| + c_u\|u\|
\end{equation}
on the unknown residual $r(x,u)$ holds true.
As shown in~\cite[Prop.~3, Prop.~4]{vatani:hovd:olaru:2014}, an unstable discrete-time bilinear system can only be (globally) stabilized by a rational control law if the degrees of the polynomial in the numerator and the polynomial in the denominator are the same. 
Hence, neither a linear nor a polynomial state feedback is sufficient to stabilize the system.
The following theorem establishes a globally exponentially stabilizing controller for the uncertain bilinear system~\eqref{eq:dynamics-bilinear} based on SOS optimization.
\begin{theorem}\label{thm:controller-design-bilinear}
    Suppose constants $c_x,c_u>0$ such that~\eqref{eq:proportional-bound-bilinear} holds.
    If there exist $\alpha>0$, $P=P^\top\succ 0$ of size $n\times n$, $L_\mathrm{n}\in\bbR[x,2\alpha-1]^{m\times n}$, $\tau\in\SOS_+[x,2\alpha]$, $u_\dd\in\SOS_+[x,2\alpha]$, and $\rho>0$ such that~\eqref{eq:controller-design-bilinear}
    \begin{figure*}[tb]
        \begin{equation}\label{eq:controller-design-bilinear}
            \begin{bmatrix}
                u_\dd(x)P - \tau(x)I_n & 0 & 0 & u_\dd(x)AP + B_0 L_\mathrm{n}(x) + \tB (L_\mathrm{n}(x)\kron x) \\
                \star & \frac{\tau(x)}{2c_x^{2}}I_n & 0 & u_\dd(x) P \\
                \star & \star & \frac{\tau(x)}{2c_u^{2}} I_m & L_\mathrm{n}(x) \\
                \star & \star & \star & u_\dd(x) (P - \rho I_n)
            \end{bmatrix}
            \in\SOS[x,2\alpha]^{3n+m}
        \end{equation}
        \medskip
        \hrule
    \end{figure*}
    holds,
    then the controller
    \begin{equation}\label{eq:controller-bilinear-explicit}
        \mu(x) = \frac{1}{u_\dd(x)}L_\mathrm{n}(x)P^{-1} x
    \end{equation}
    exponentially stabilizes the origin of the bilinear system~\eqref{eq:dynamics-bilinear}.
\end{theorem}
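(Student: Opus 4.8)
The plan is to certify global exponential stability through the quadratic Lyapunov function $V(x) = x^\top P^{-1}x$, which is positive definite and radially unbounded since $P\succ 0$, and to extract the required decrease directly from the positive semidefiniteness of the matrix in~\eqref{eq:controller-design-bilinear}. I would first record two globality facts implied by the $\SOS_+$ requirements: $\tau(x),u_\dd(x)\ge\varepsilon>0$ for all $x$, so the rational controller~\eqref{eq:controller-bilinear-explicit} is globally well-defined (its denominator never vanishes) and satisfies $\mu(0)=0$. Moreover, being an SOS matrix, the left-hand side of~\eqref{eq:controller-design-bilinear}, call it $M(x)$, obeys $M(x)\succeq 0$ for \emph{all} $x\in\bbR^n$, which is precisely what turns the argument into a global statement.

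The first substantive step is to rewrite the closed loop. Inserting~\eqref{eq:controller-bilinear-explicit} into~\eqref{eq:dynamics-bilinear} and invoking the elementary Kronecker identity $(Mw)\kron x = (M\kron x)w$ with $M=L_\mathrm{n}(x)$ and $w=P^{-1}x$, I would obtain
\[
\tB\big(\mu(x)\kron x\big) = \tfrac{1}{u_\dd(x)}\tB\big(L_\mathrm{n}(x)\kron x\big)P^{-1}x ,
\]
so that the nominal successor state collapses to $\zeta := \tfrac{1}{u_\dd(x)}\Theta(x)P^{-1}x$, where $\Theta(x)=u_\dd(x)AP + B_0 L_\mathrm{n}(x) + \tB(L_\mathrm{n}(x)\kron x)$ is \emph{exactly} the $(1,4)$ block of $M(x)$. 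Hence $x_+=\zeta+r(x,u)$, and the appearance of $\Theta$ in $M$ is no accident: it encodes the closed-loop dynamics in the Lyapunov coordinate.

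For the decrease, I would evaluate $\eta^\top M(x)\eta\ge 0$ at the tailored vector $\eta=(\eta_1,\eta_2,\eta_3,\eta_4)=\big(-P^{-1}x_+,\,-\tfrac{2c_x^2 u_\dd}{\tau}x,\,-\tfrac{2c_u^2 u_\dd}{\tau}u,\,P^{-1}x\big)$ of block sizes $n,n,m,n$. Using $\Theta(x)P^{-1}x=u_\dd\zeta$, $u_\dd P\cdot P^{-1}x=u_\dd x$, and $L_\mathrm{n}(x)P^{-1}x=u_\dd u$, the block-$4$ term contributes $u_\dd\big(V(x)-\rho\|P^{-1}x\|^2\big)$; the $(1,4)$ coupling produces $-u_\dd V(x_+)$ together with the term $-\tau\|P^{-1}x_+\|^2$ and the indefinite residual cross-term $2u_\dd x_+^\top P^{-1}r$; and the two middle blocks, minimized exactly by the chosen $\eta_2,\eta_3$, contribute $-\tfrac{u_\dd^2}{\tau}\big(2c_x^2\|x\|^2+2c_u^2\|u\|^2\big)$. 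Handling the cross-term by Young's inequality $2u_\dd x_+^\top P^{-1}r\le \tau\|P^{-1}x_+\|^2+\tfrac{u_\dd^2}{\tau}\|r\|^2$, the $-\tau I$ contribution of the $(1,1)$ block absorbs the first summand, and substituting the proportional bound~\eqref{eq:proportional-bound-bilinear} in the form $\|r\|^2\le 2c_x^2\|x\|^2+2c_u^2\|u\|^2$ cancels the remaining residual terms against the middle-block contributions. After dividing by $u_\dd(x)>0$, this leaves
\[
V(x_+)\le V(x)-\rho\|P^{-1}x\|^2 \le V(x)-\tfrac{\rho}{\lambda_{\max}(P)^2}\|x\|^2 ,
\]
a strict quadratic decrease, whence global exponential stability follows from the standard discrete-time Lyapunov theorem together with the quadratic bounds on $V$.

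The main obstacle is this middle step: recognizing that the block pattern of~\eqref{eq:controller-design-bilinear} is exactly an S-procedure certificate, i.e. identifying the test vector (equivalently, the Schur-complement/completion-of-squares structure) under which the $\tfrac{\tau}{2c_x^2}I$, $\tfrac{\tau}{2c_u^2}I$ blocks and the $-\tau I$ term conspire, through Young's inequality and the proportional bound, to annihilate the indefinite residual contribution. One must also track carefully the roles of the $\SOS_+$ requirements (so that $\tau,u_\dd$ stay bounded away from zero, legitimizing the divisions and the global well-posedness) and of $\rho>0$, which supplies the exponential rate; by contrast, the Kronecker rewriting is essential but routine once the identity is spotted.
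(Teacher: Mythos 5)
Your proof is correct, but it takes a genuinely different route from the paper. The paper \emph{derives} the condition: starting from~\eqref{eq:controller-design-bilinear} it applies two Schur complements, decomposes the result into three structured quadratic terms, invokes the dualization lemma of~\cite[Lm.~4.9]{scherer:weiland:2000}, divides by $u_\dd$, and then applies the generalized S-procedure~\cite[Lm.~2.1]{tan:2006} with the rational multiplier $\nicefrac{u_\dd(x)}{\tau(x)}$ before finally checking that the proportional bound~\eqref{eq:proportional-bound-bilinear} implies the S-procedure constraint. You instead \emph{verify} the condition directly: since $M(x)\in\SOS[x,2\alpha]^{3n+m}$ gives $\eta^\top M(x)\eta\ge 0$ for all $x$ and all $\eta$, you evaluate at the tailored vector $\eta=\bigl(-P^{-1}x_+,\,-\tfrac{2c_x^2u_\dd}{\tau}x,\,-\tfrac{2c_u^2u_\dd}{\tau}u,\,P^{-1}x\bigr)$ (whose middle blocks are exactly the minimizers of the corresponding partial quadratic forms), absorb the residual cross-term by Young's inequality, and cancel the leftover $\tfrac{u_\dd^2}{\tau}\|r\|^2$ against the middle-block contributions via~\eqref{eq:proportional-bound-bilinear} --- I checked the block algebra, including $\Theta(x)P^{-1}x=u_\dd(x)(x_+-r)$ and $L_\mathrm{n}(x)P^{-1}x=u_\dd(x)u$, and it all goes through. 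What each approach buys: yours is elementary and self-contained (no dualization lemma, no S-procedure citation), and it makes transparent exactly which inequality each block of~\eqref{eq:controller-design-bilinear} certifies; the paper's primal-dual route is longer but explains how the convex condition is \emph{synthesized} from a Lyapunov-plus-S-procedure formulation, which is the machinery one needs to extend the design to performance specifications, as the authors do in follow-up work. One small point in your favor: your final decrease rate $\nicefrac{\rho}{\lambda_{\max}(P)^2}$ is the correct constant, whereas the paper's stated $\varepsilon=\rho\|P\|_2^2$ has the norm factor inverted (a harmless slip, since any positive $\varepsilon$ suffices for the qualitative claim).
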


\begin{proof}
    We consider the Lyapunov function $V(x)=x^\top P^{-1} x$ with $P^{-1}\succ 0$ and want to show $\Delta V(x) = x_+^\top P^{-1} x_+ - x^\top P^{-1} x \leq -\varepsilon\|x\|^2$ for all $x\in\bbR^n$ with some $\varepsilon>0$.

    To this end, we define $K_\mathrm{n}(x)=L_\mathrm{n}(x)P^{-1}$ and observe  
    \begin{equation*}
        L_\mathrm{n}(x) \kron x 
        = K_\mathrm{n}(x) P \kron x 
        = (K_\mathrm{n}(x) \kron x) P
    \end{equation*}
    using $P=P\kron 1$.
    Then, defining
    \begin{equation*}
        \Xi(x) = u_\dd(x)A + B_0 K_\mathrm{n}(x) + \tB (K_\mathrm{n}(x)\kron x),
    \end{equation*}
    substituting $K_\mathrm{n}(x)$ and $\Xi(x)$ into~\eqref{eq:controller-design-bilinear}, and applying the Schur complement (cf.~\cite{boyd:vandenberghe:2004}) yields
    \footnotesize
    \begin{equation*}
        \hspace*{-0.01\linewidth}
        \begin{bmatrix}
            u_\dd(x) P - \tau(x) I_n & 0 & 0 & 0 & \Xi(x)P \\
            \star & \frac{\tau(x)}{2c_x^{2}}I_n & 0 & 0 & u_\dd(x) P \\
            \star & \star & \frac{\tau(x)}{2c_u^{2}} I_m & 0 & K_\mathrm{n}(x)P \\
            \star & \star & \star & \frac{u_\dd(x)}{\rho} P^2 & u_\dd(x) P \\
            \star & \star & \star & \star & u_\dd(x) P
        \end{bmatrix}
        \succeq 0
    \end{equation*}
    \normalsize
    for all $x\in\bbR^n$.
    Next, we apply the Schur complement to obtain
    \begin{multline}\label{eq:stability-condition-SOS-Schur}
        \diag\left(
            u_\dd(x) P - \tau(x) I_n, 
            \frac{\tau(x)}{2c_x^{2}}I_n, 
            \frac{\tau(x)}{2c_u^{2}} I_m, 
            \frac{u_\dd(x)}{\rho} P^2
        \right)
        \\
        - \begin{bmatrix}
            \Xi(x)P \\
            u_\dd(x) P \\
            K_\mathrm{n}(x)P \\
            u_\dd(x) P
        \end{bmatrix}
        (u_\dd(x) P)^{-1}
        \left[\star\right]^\top
        \succeq 0.
    \end{multline}
    We decompose~\eqref{eq:stability-condition-SOS-Schur} into three terms, which we write as
    \begin{multline*}
        \begin{bmatrix}
            u_\dd(x) P & 0 & 0 & 0 \\
            0 & 0 & 0 & 0 \\
            0 & 0 & 0 & 0 \\
            0 & 0 & 0 & 0
        \end{bmatrix}
        - \begin{bmatrix}
            \Xi(x) \\
            u_\dd(x) I_n \\
            K_\mathrm{n}(x) \\
            u_\dd(x) I_n
        \end{bmatrix}
        \frac{1}{u_\dd(x)} P
        \left[\star\right]^\top
        \\
        = 
        \begin{bmatrix}
            \Xi(x) & -I_n \\
            u_\dd(x) I_n & 0 \\
            K_\mathrm{n}(x) & 0 \\
            u_\dd(x) I_n & 0
        \end{bmatrix}
        \frac{1}{u_\dd(x)}\Pi_\mathrm{stab}(x)^{-1}
        \left[\star\right]^\top
    \end{multline*}
    and 
    \begin{multline*}
        \diag\left( 
            -\tau(x) I_n, 
            \frac{\tau(x)}{2c_x^2} I_n, 
            \frac{\tau(x)}{2c_u^2} I_m, 
            0
        \right)
        \\
        = \begin{bmatrix}
            0 & 0 & I_n \\
            -I_n & 0 & 0 \\
            0 & -I_m & 0 \\
            0 & 0 & 0 
        \end{bmatrix}
        \tau(x) \Pi_r^{-1} 
        \left[\star\right]^\top
    \end{multline*}
    and
    \begin{equation*}
        \begin{bmatrix}
            0 & 0 & 0 & 0 \\
            0 & 0 & 0 & 0 \\
            0 & 0 & 0 & 0 \\
            0 & 0 & 0 & \frac{u_\dd(x)}{\rho} P^2
        \end{bmatrix}
        = \begin{bmatrix}
            0 \\ 0 \\ 0 \\ -I_n
        \end{bmatrix}
        \frac{u_\dd(x)}{\rho} P^2
        \left[\star\right]^\top,
    \end{equation*}
    where 
    \begin{align*}
        \Pi_\mathrm{stab}(x) 
        &= \diag\left( -P^{-1}, \frac{1}{u_\dd(x)^2}P^{-1} \right),
        \\
        \Pi_r
        &= \diag\left(2c_x^2 I_n, 2c_u^2 I_m, -I_n\right).
    \end{align*}
    Hence,~\eqref{eq:stability-condition-SOS-Schur} is equivalent to
    \begin{equation*}
    \hspace*{-0.02\linewidth}
        \Psi^\top
        \left[\def\arraystretch{1.2}\begin{array}{c|c|c}
            \frac{1}{u_\dd(x)}\Pi_\mathrm{stab}(x)^{-1} & 0 & 0\\\hline
            0 & \tau(x) \Pi_r^{-1} & 0 \\\hline 
            0 & 0 & \frac{u_\dd(x)}{\rho} P^2
        \end{array}\right]
        \Psi
        \succeq 0
    \end{equation*}
    with
    \begin{equation*}
    \hspace*{-0.02\linewidth}
        \Psi^\top 
        = 
        \left[\def\arraystretch{1.2}\begin{array}{cc|ccc|c}
            \Xi(x) & -I_n & 0 & 0 & I_n & 0 \\
            u_\dd(x) I_n & 0 & -I_n & 0 & 0 & 0 \\
            K_\mathrm{n}(x) & 0 & 0 & -I_m & 0 & 0 \\
            u_\dd(x) I_n & 0 & 0 & 0 & 0 & -I_n
        \end{array}\right].
    \end{equation*}
    We apply the dualization lemma~\cite[Lm.~4.9]{scherer:weiland:2000} and obtain
    \begin{equation}\label{eq:proof-primal-inequality}
        \tilde{\Psi}^\top
        \left[\def\arraystretch{1.2}\begin{array}{c|c|c}
            u_\dd(x)\Pi_\mathrm{stab}(x) & 0 & 0 \\\hline
            0 & \frac{1}{\tau(x)} \Pi_r & 0 \\\hline
            0 & 0 & \frac{\rho}{u_\dd(x)} P^{-2}
        \end{array}\right]
        \tilde{\Psi}
        \preceq 0
    \end{equation}
    with 
    \small
    \begin{equation*}
        \tilde{\Psi}^\top
        =
        \left[\begin{array}{cc|ccc|c}
            I_n & \Xi(x)^\top & u_\dd(x) I_n & K_\mathrm{n}(x)^\top & 0 & u_\dd(x) I_n \\
            0 & I_n & 0 & 0 & I_n & 0
        \end{array}\right],
    \end{equation*}
    \normalsize
    where $\tilde{\Psi}$ results following the discussion in~\cite[Sec.~8.1.2]{scherer:weiland:2000}. 
    Recall that $u_\dd(x)$ is strictly SOS such that we can divide Inequality~\eqref{eq:proof-primal-inequality} by $u_\dd(x)$ leading to
    \begin{equation}\label{eq:proof-primal-inequality-scaled}
        \tilde{\Psi}^\top
        \left[\def\arraystretch{1.2}\begin{array}{c|c|c}
            \Pi_\mathrm{stab}(x) & 0 & 0 \\\hline
            0 & \frac{1}{\tau(x)u_\dd(x)} \Pi_r & 0 \\\hline
            0 & 0 & \frac{\rho}{u_\dd(x)^2} P^{-2}
        \end{array}\right]
        \tilde{\Psi}
        \preceq 0.
    \end{equation}
    Multiplying~\eqref{eq:proof-primal-inequality-scaled} from the left and from the right by $\begin{bmatrix}
        x^\top & u_\dd(x) r(x,u)^\top
    \end{bmatrix}$ and its transpose, respectively,
    and exploiting the generalized S-procedure \cite[Lm.~2.1]{tan:2006} with the rational multiplier $\frac{u_\dd(x)}{\tau(x)} > 0$ for all $x\in\bbR^n$ yields
    \begin{equation}\label{eq:Lyapunov-inequality-S-procedure}
        \left[\star\right]^\top
        \Pi_\mathrm{stab}(x)
        \begin{bmatrix}
            x \\ \Xi(x) x + u_\dd(x)r(x,u)
        \end{bmatrix}
        \leq -\rho \|P\|_2^2 \|x\|^2
    \end{equation}
    for all $x\in\bbR^n$ satisfying 
    \begin{equation}\label{eq:Lyapunov-inequality-S-procedure-uncertainty}
        \left[\star\right]^\top
        \frac{1}{u_\dd(x)^2}\Pi_r
        \begin{bmatrix}
            u_\dd(x) x \\
            K_\mathrm{n}(x) x \\
            u_\dd(x) r(x,u)
        \end{bmatrix}
        \geq 0.
    \end{equation}
    Then, we rewrite Inequality~\eqref{eq:Lyapunov-inequality-S-procedure} as
    \begin{multline}\label{eq:proof-Lyapunov-inequality}
        \left[\star\right]^\top
        \begin{bmatrix}
            -P^{-1} & 0 \\ 
            0 & P^{-1}
        \end{bmatrix}
        \begin{bmatrix}
            x \\ \frac{1}{u_\dd(x)}\Xi(x) x + r(x,u)
        \end{bmatrix}
        \\
        \leq -\rho \|P\|_2^2 \|x\|^2
    \end{multline}
    and note that after substituting the control law
    \begin{equation*}
        \mu(x) 
        = \frac{1}{u_\dd(x)} L_\mathrm{n}(x) P^{-1} x
        = \frac{1}{u_\dd(x)} K_\mathrm{n}(x)x
    \end{equation*}
    into the dynamics~\eqref{eq:dynamics-bilinear}, $x_+$ reads
    \begin{align*}
        x_+ 
        &= Ax + \tfrac{1}{u_\dd(x)}\left( 
            B_0 K_\mathrm{n}(x)x + \tB (K_\mathrm{n}(x)x\kron x)
        \right)
        + r(x,u)
        \\
        &= \tfrac{1}{u_\dd(x)}\Xi(x) x + r(x,u).
    \end{align*}
    Thus,~\eqref{eq:proof-Lyapunov-inequality} establishes the desired Lyapunov inequality $\Delta V(x) \leq -\varepsilon\|x\|^2$ with $\varepsilon=\rho \|P\|_2^2$ for all $x\in\bbR^n$ satisfying~\eqref{eq:Lyapunov-inequality-S-procedure-uncertainty}.
    Finally, we exploit the bound in~\eqref{eq:proportional-bound-bilinear}, i.e.,
    \begin{align*}
        \|r(x,u)\|^2
        &\leq 2c_x^2 x^\top x + 2c_u^2 u^\top u
        \\
        &\leq x^\top \left(2c_x^2 I_n + \tfrac{2c_u^2}{u_\dd(x)^2} K_\mathrm{n}(x)^\top K_\mathrm{n}(x)\right) x
    \end{align*}
    corresponding to~\eqref{eq:Lyapunov-inequality-S-procedure-uncertainty} and, thus, completing the proof.
\end{proof}
We note that~\eqref{eq:controller-design-bilinear} is an SOS program for a fixed polynomial $u_\dd$, i.e., it is linear in the decision variables $P$, $L_\mathrm{n}$, $\tau$, $\rho$.
Thus, the SOS-based controller design can be solved using convex optimization techniques.
Here, the a priori chosen $u_\dd$ acts as a tuning parameter offering an additional degree of freedom in the controller parametrization and it can also be optimized when fixing the other decision variables $P$ and $\rho$.
A higher degree $\alpha$ in the controller design allows for a more flexible controller design, but it also increases the computational complexity of the SOS program.
In particular, if the controller design in Theorem~\ref{thm:controller-design-bilinear} is infeasible for a given $u_\dd$ with degree $\alpha$, one can increase $\alpha$ and re-solve the SOS program.
Note that if exponential stability is not desired but asymptotic stability is sufficient, this is achieved by setting $\rho=0$ in~\eqref{eq:controller-design-bilinear}.
The key technical novelty compared to the design schemes in~\cite{nobuyama:aoyagi:kami:2011,vatani:hovd:olaru:2014} is the consideration of \emph{uncertain} bilinear systems via SOS optimization.
Moreover, existing robust controller design methods for bilinear systems~\cite{strasser:berberich:allgower:2023b,strasser:schaller:worthmann:berberich:allgower:2024b} only admit local stability guarantees and typically a very small guaranteed RoA. 
On the contrary, the proposed approach guarantees global exponential stability of the uncertain system~\eqref{eq:dynamics-bilinear}.
As we will see in Section~\ref{sec:numerical-example-bilinear} with a numerical example, our SOS-based controller design yields, apart from the \emph{global} RoA, a clear improvement in feasibility, i.e., the design can be solved for significantly larger constants $c_x$, $c_u$ characterizing the proportional error bound~\eqref{eq:proportional-bound-bilinear}.
Additional advantages of the controller design in Theorem~\ref{thm:controller-design-bilinear} are that it can be applied to general bilinear systems~\eqref{eq:dynamics-bilinear} and straightforwardly extended to account for global performance specifications, whereas the design in~\cite{strasser:berberich:allgower:2023b} is inherently constrained to \emph{local} quadratic performance and only feasible if $(A,B_0)$ is stabilizable.
%
\section{Koopman-based control via SOS optimization}\label{sec:controller-design-nonlinear}
After establishing a guaranteed stabilizing controller design for bilinear systems, we now extend the results of Section~\ref{sec:controller-design-bilinear} to general nonlinear systems~\eqref{eq:dynamics-nonlinear} using Koopman operator theory. 
To this end, we rely on the data-driven bilinear surrogate model~\eqref{eq:dynamics-nonlinear-bilinear} obtained by SafEDMD. 
Then, we design a state-feedback controller $\mu(x)$ analogous to the bilinear case in Theorem~\ref{thm:controller-design-bilinear} to stabilize the origin of the nonlinear system~\eqref{eq:dynamics-nonlinear}. 
Here, $\mu(x)$ is rational in the lifted state $\Phi(x)$.
Before stating the main result, we introduce an assumption on the data-driven bilinear surrogate model.
\begin{assumption}\label{ass:proportional-bound}
    The proportional bound~\eqref{eq:proportional-bound-nonlinear} holds for some $c_x,c_u>0$.
\end{assumption}
Suppose the bilinear surrogate model~\eqref{eq:dynamics-nonlinear-bilinear} is obtained via SafEDMD with some $\Delta t>0$ and data $\cD$ in~\eqref{eq:dataset} sampled i.i.d. from a compact region $\bbX$ for some data length $d>0$, and the projection error onto the finite-dimensional dictionary spanned by the observable function $\Phi$ is proportionally bounded with constants $\tilde{c}_x$, $\tilde{c}_u$, e.g., trivially due to the invariance of the dictionary.
Then, Assumption~\ref{ass:proportional-bound} is satisfied for $c_x=\tilde{c}_x+\bar{c}_x$ and $c_u=\tilde{c}_u + \bar{c}_u$ with probability $1-\delta\in(0,1)$ due to~\cite[Cor.~3.2]{strasser:schaller:worthmann:berberich:allgower:2024b}, where $\bar{c}_x,\bar{c}_u\in\cO(\nicefrac{1}{\sqrt{\delta d}} + \Delta t^2)$.
For a more detailed discussion on how to establish a proportional bound on the projection error, we refer to~\cite{strasser:schaller:berberich:worthmann:allgower:2025}.
The following corollary establishes our proposed data-driven controller design for nonlinear systems. 
\begin{corollary}\label{cor:controller-design-nonlinear}
    Suppose Assumptions~\ref{ass:continuity-dynamics} and~\ref{ass:proportional-bound} hold.
    If there exist $\alpha>0$, $P=P^\top\succ 0$ of size $N\times N$, $L_\mathrm{n}\in\bbR[z,2\alpha-1]^{m\times N}$, $\tau\in\SOS_+[z,2\alpha]$, $u_\dd\in\SOS_+[z,2\alpha]$, and $\rho>0$ such that~\eqref{eq:controller-design-nonlinear}
    \begin{figure*}[tb]
        \begin{equation}\label{eq:controller-design-nonlinear}
            \begin{bmatrix}
                u_\dd(z)P - \tau(z)I_N & 0 & 0 & u_\dd(z)AP + B_0 L_\mathrm{n}(z) + \tB (L_\mathrm{n}(z)\kron z) \\
                \star & \frac{\tau(z)}{2c_x^{2}}I_N & 0 & u_\dd(z)P \\
                \star & \star & \frac{\tau(z)}{2c_u^{2}} I_m & L_\mathrm{n}(z) \\
                \star & \star & \star & u_\dd(z) (P - \rho I_N)
            \end{bmatrix}
            \in\SOS[z,2\alpha]^{3N+m}
        \end{equation}
        \medskip
        \hrule
    \end{figure*}
    holds with $z\in\bbR^N$,
    then the sampled controller
    \begin{equation}\label{eq:controller-nonlinear-explicit}
        \mu_\mathrm{s}(x(t))
        = \mu(x(k\Delta t))
        ,\quad
        t\in[k\Delta t, (k+1)\Delta t), k\geq 0
    \end{equation}
    with $\mu(x) = \frac{1}{u_\dd(\Phi(x))}L_\mathrm{n}(\Phi(x))P^{-1} \Phi(x)$ globally exponentially stabilizes the origin of the continuous-time nonlinear system~\eqref{eq:dynamics-nonlinear}.
\end{corollary}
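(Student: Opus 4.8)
The plan is to reduce Corollary~\ref{cor:controller-design-nonlinear} to Theorem~\ref{thm:controller-design-bilinear} applied in the lifted coordinate $z=\Phi(x)$, and then to transfer the resulting discrete-time exponential stability of the lifted state to continuous-time exponential stability of the original state under the sampled-data law~\eqref{eq:controller-nonlinear-explicit}.

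First, I would note that the SOS condition~\eqref{eq:controller-design-nonlinear} is literally~\eqref{eq:controller-design-bilinear} after replacing the variable $x\in\bbR^n$ by $z\in\bbR^N$ and the block size $n$ by $N$, while the error bound~\eqref{eq:proportional-bound-nonlinear} is exactly~\eqref{eq:proportional-bound-bilinear} with $\|x\|$ replaced by $\|\Phi(x)\|=\|z\|$. Interpreting the SafEDMD surrogate~\eqref{eq:dynamics-nonlinear-bilinear} as the bilinear system $z_+=Az+B_0u+\tB(u\kron z)+r$ in the lifted state, Theorem~\ref{thm:controller-design-bilinear} therefore applies directly. Its proof yields the Lyapunov function $V(z)=z^\top P^{-1}z$ with the decrease $\Delta V(z)\leq-\varepsilon\|z\|^2$ for all $z\in\bbR^N$ along the closed loop with $\mu(z)=\frac{1}{u_\dd(z)}L_\mathrm{n}(z)P^{-1}z$, where $u_\dd$ being strictly SOS guarantees $u_\dd(z)>0$ so that $\mu$ is well defined.

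Next, I would set $z=\Phi(x_k)$. Since the dictionary has the structure $\Phi(x)=[x^\top\;\phi_{n+1}\;\cdots\;\phi_N]^\top$, its first $n$ entries coincide with $x$, so $\|x_k\|\leq\|\Phi(x_k)\|$. Together with $P\succ0$ and the quadratic decrease of $V$, this gives an exponential bound $\|x_k\|\leq C\lambda^k\|\Phi(x_0)\|$ with $C>0$, $\lambda\in(0,1)$ on the \emph{sampled} states $x_k=x(k\Delta t)$.

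The main obstacle, and the remaining step, is to upgrade this discrete-time guarantee to continuous-time exponential stability of~\eqref{eq:dynamics-nonlinear} under the zero-order-hold controller~\eqref{eq:controller-nonlinear-explicit}. On each interval $[k\Delta t,(k+1)\Delta t)$ the input is frozen at $u_k=\mu(x_k)$, so the closed loop obeys $\dot{x}=f_c(x,u_k)$. By Assumption~\ref{ass:continuity-dynamics} the vector field is continuous and locally Lipschitz around the origin, so a Gr\"onwall-type estimate bounds the inter-sample excursion as $\|x(t)-x_k\|\leq(e^{\ell\Delta t}-1)\|x_k\|$ on a neighborhood of the origin, with $\ell$ collecting the local Lipschitz constant and the feedback magnitude. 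Interleaving this bounded inter-sample growth with the exponential decay at the sampling instants then produces a continuous-time bound $\|x(t)\|\leq C'e^{-\gamma t}\|x(0)\|$. I expect the delicate point to be reconciling the inter-sample argument with the \emph{global} claim: the discrete-time convergence holds globally, but the local Lipschitz property only controls excursions near the origin, so one must argue that the sampled trajectory enters and remains in a neighborhood on which the inter-sample estimate is valid --- this is precisely where Assumption~\ref{ass:continuity-dynamics} and the sampling rate $\Delta t$ must be used jointly.
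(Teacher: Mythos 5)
Your proposal takes essentially the same route as the paper: the paper's proof of Corollary~\ref{cor:controller-design-nonlinear} consists exactly of (i) rerunning the proof of Theorem~\ref{thm:controller-design-bilinear} in the lifted variable $z=\Phi(x)$, where \eqref{eq:controller-design-nonlinear} and Assumption~\ref{ass:proportional-bound} play the roles of \eqref{eq:controller-design-bilinear} and \eqref{eq:proportional-bound-bilinear}, and (ii) citing \cite[Cor.~4.2]{strasser:schaller:worthmann:berberich:allgower:2024b} for the passage from Lyapunov decay at the sampling instants to exponential stability of the continuous-time system under the zero-order-hold law \eqref{eq:controller-nonlinear-explicit}. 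Your Gr\"onwall-based inter-sample estimate, together with the observation that Assumption~\ref{ass:continuity-dynamics} and an invariance argument (a Lyapunov sublevel set on which the local Lipschitz bound applies) must be combined to control the inter-sample excursions, is precisely the content of that cited result, so your reconstruction fills in the same two steps the paper delegates to references.
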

\begin{proof}
    See~\cite{strasser:berberich:allgower:2023b,strasser:schaller:worthmann:berberich:allgower:2024b} for the necessary modifications in the proof of Theorem~\ref{thm:controller-design-bilinear}, and we refer to~\cite[Cor.~4.2]{strasser:schaller:worthmann:berberich:allgower:2024b} to derive the closed-loop stability of the continuous-time system based on sampled feedback.
\end{proof}

Corollary~\ref{cor:controller-design-nonlinear} provides a guaranteed stabilizing controller for the unknown nonlinear system~\eqref{eq:dynamics-nonlinear} based on the bilinear surrogate model~\eqref{eq:dynamics-nonlinear-bilinear} obtained by SafEDMD.
More precisely, the sampled nonlinear feedback controller~\eqref{eq:controller-nonlinear-explicit} is determined by solving the SOS program~\eqref{eq:controller-design-nonlinear} for a fixed $u_\dd$, which serves as an additional degree of freedom chosen by the user.
We emphasize that, in general, SafEDMD can only guarantee Assumption~\ref{ass:proportional-bound} with probability $1-\delta$ and on a compact region $\bbX$ from which the data are sampled.
Hence, the stability guarantees for the underlying nonlinear system~\eqref{eq:dynamics-nonlinear} hold on a suitable subset of $\bbX$ with probability $1-\delta$. 
In particular, for any solution of the SOS program in Corollary~\ref{cor:controller-design-nonlinear} there exists a constant $c>0$ such that the corresponding Lyapunov sublevel set $\Omega\coloneqq \{x\in\bbR^n\mid \Phi(x)^\top P^{-1} \Phi(x) \leq c\}\subseteq\bbX$ is invariant and, therefore, guarantees Assumption~\ref{ass:proportional-bound}. 
Thus, closed-loop stability can be guaranteed with probability $1-\delta$ for any initial condition in $\Omega$, whose size is mainly influenced by the shape of the Lyapunov function and the sampling region $\bbX$.
Analogous to~\cite[Cor.~4.2]{strasser:schaller:worthmann:berberich:allgower:2024b}, Corollary~\ref{cor:controller-design-nonlinear} establishes closed-loop stability guarantees for continuous-time systems from discrete samples, whereas most existing approaches require derivative data, which are typically hard to obtain.
If the conditions in Corollary~\ref{cor:controller-design-nonlinear} are not satisfied, a possible strategy is to repeat the controller design for a more accurat data-driven representation, i.e., by using a different lifting function $\Phi$ or a larger dataset $\cD$.
An interesting topic for future research is to analyze the feasibility of the SOS program~\eqref{eq:controller-design-nonlinear} depending on the structural properties of the underlying nonlinear system~\eqref{eq:dynamics-nonlinear}.

We note that~\cite[Cor.~4.2]{strasser:schaller:worthmann:berberich:allgower:2024b} establishes a related controller design based on linear matrix inequalities (LMIs) guaranteeing stability of the continuous-time system~\eqref{eq:dynamics-nonlinear} with probability $1-\delta$ locally around the origin.
However, the therein derived controller conservatively over-approximates the bilinearity of the surrogate model~\eqref{eq:dynamics-nonlinear-bilinear} and, thus, the guaranteed RoA may become small in order to ensure feasibility of the linear controller design. 
As key advantages over~\cite{strasser:schaller:worthmann:berberich:allgower:2024b}, the proposed controller design in Corollary~\ref{cor:controller-design-nonlinear} explicitly exploits the bilinear structure of the lifted system~\eqref{eq:dynamics-nonlinear-bilinear} without over-approximating the bilinearity, which leads to a less conservative controller design with increased feasibility.
This allows for a significantly larger guaranteed RoA and a more efficient usage of data.
In particular, by tolerating larger constants $c_x$, $c_u$ in the proportional error bound~\eqref{eq:proportional-bound-nonlinear}, fewer data samples are sufficient for a feasible controller design compared to~\cite{strasser:schaller:worthmann:berberich:allgower:2024b}.
Further, our results provide the basis for incorporating performance specifications for the underlying nonlinear system~\cite{strasser:berberich:schaller:worthmann:allgower:2025}.

We provide a summary of the proposed controller design in Algorithm~\ref{alg:SOS-controller}.
\begin{algorithm}[t]  
    \caption{SOS-based nonlinear data-driven controller design with closed-loop guarantees.}
    \begin{algorithmic}[1]
        \State Choose lifting function $\Phi:\bbR^n\to\bbR^N$, $\alpha>0$, and denominator $u_\dd\in\SOS_+[z,2\alpha]$ for $z\in\bbR^N$.
        \State Generate data $\cD$ from~\eqref{eq:dynamics-nonlinear} with sampling time $\Delta t>0$.
        \State Determine $c_x$, $c_u$ such that Assumption~\ref{ass:proportional-bound} holds.
        \If{SOS condition~\eqref{eq:controller-design-nonlinear} is feasible}\label{alg:line:controller-design}
            \State Apply controller~\eqref{eq:controller-nonlinear-explicit} to nonlinear system~\eqref{eq:dynamics-nonlinear}.
        \Else
            \State Repeat from line~1 with different parameters.
        \EndIf
    \end{algorithmic}
    \label{alg:SOS-controller}
\end{algorithm}
%
\section{Numerical examples}\label{sec:numerical-example}
Next, we illustrate the proposed controller design with numerical examples and showcase the benefits compared to existing approaches.
First, we apply the methods from Section~\ref{sec:numerical-example-bilinear} to a bilinear system.
Then, Section~\ref{sec:numerical-example-nonlinear} applies the results from Section~\ref{sec:controller-design-nonlinear} to control an unknown nonlinear system.
The simulations are conducted in MATLAB using the YALMIP toolbox~\cite{lofberg:2004} with its SOS module~\cite{lofberg:2009} and the MOSEK solver~\cite{mosek:2022} for the SOS programs.

\subsection{Bilinear system: Building control}\label{sec:numerical-example-bilinear}
We start with a discrete-time bilinear system and compare our proposed SOS-based controller design to the existing approach proposed in~\cite{strasser:berberich:allgower:2023b}. 
To this end, we consider a zone temperature process used for building control~\cite{huang:2011,yuan:perez:2006}, i.e.,
\begin{equation}\label{eq:exmp:building-control}
    x_+ 
    = x 
    + T_s V_z^{-1} u (T_0 - x)
    + r(x,u)
\end{equation}
with zone temperature $x\in\bbR$, air volume flow rate $u\in\bbR$, zone volume $V_z$, and supply air temperature $T_0$, where all variables denote deviations from the desired setpoints.
Note that $x$ should be larger than $T_0$ to ensure cooling. 
For the simulation, we choose the parameters as $V_z=2$, $T_0=-1$, $T_s=1$, and assume the system dynamics to be known except for the residual term $r$. 
For the residual, we assume $r$ to be bounded by the proportional bound~\eqref{eq:proportional-bound-bilinear} with $c_x,c_u>0$.
The goal is to design a controller that stabilizes the perturbed bilinear system~\eqref{eq:exmp:building-control} for all possible perturbations $r$ satisfying the bound~\eqref{eq:proportional-bound-bilinear}.
To this end, we choose the denominator $u_\dd\in\SOS_+[x,2\alpha]$ in the controller~\eqref{eq:controller-bilinear-explicit} as $u_\dd(x)=0.01 + (1+x)^{2\alpha}$ for some $\alpha\in\bbN$.
Then, we report the feasibility of the SOS program~\eqref{eq:controller-design-bilinear} for chosen denominator degrees $2\alpha\in\{2,4,6,8\}$ and varying constants $c_x$, $c_u$ in the proportional bound~\eqref{eq:proportional-bound-bilinear}.
While our proposed controller design is guaranteed to globally stabilize the system, the approach in~\cite{strasser:berberich:allgower:2023b} is only locally stabilizing the system in a pre-defined region.
\begin{table}[tb]
    \centering
    \caption{Computation times of the designs in Theorem~\ref{thm:controller-design-bilinear} and~\cite{strasser:berberich:allgower:2023b}.}
    \label{tab:numerics-bilinear-computation-time}
    \begin{tabular}{c|ccccc}
        & $\alpha=1$ & $\alpha=2$ & $\alpha=3$ & $\alpha=4$ & \cite{strasser:berberich:allgower:2023b} \\\hline 
        time [s] & 0.3624 & 0.3848 & 0.4180 & 0.4765 & 0.1225
    \end{tabular}
\end{table}
\begin{figure}[tb]
    \centering
    \input{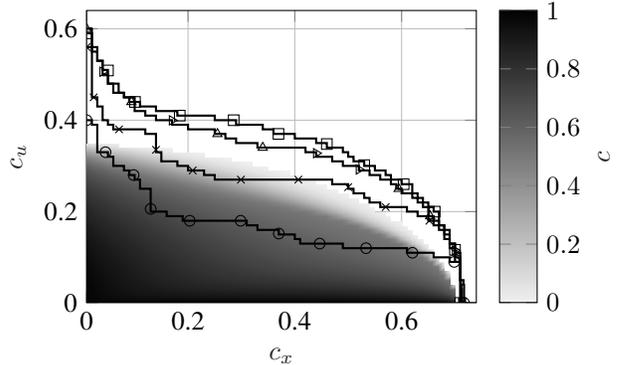}
    \caption{Feasibility of the proposed SOS-based controller design with $u_\dd\in\SOS_+(x,2\alpha)$ and the approach in~\cite{strasser:berberich:allgower:2023b}.
    Here, every combination of $c_x$, $c_u$ below the solid lines leads to a feasible and globally exponentially stabilizing controller for $\alpha=1$~\eqref{plot:bilinear:SOS:deg2}, $\alpha=2$~\eqref{plot:bilinear:SOS:deg4}, $\alpha=3$~\eqref{plot:bilinear:SOS:deg6}, $\alpha=4$~\eqref{plot:bilinear:SOS:deg8}, whereas the colored region~(\scalebox{0.5}{\ref{plot:bilinear:LMI}}) marks the feasibility of~\cite{strasser:berberich:allgower:2023b} with local stability guarantees within a region $\|x\|^2\leq c$, where the brightness indicates the value of $c$.}
    \label{fig:numerics-bilinear}
\end{figure}
For the comparison, we optimize the RoA of the controller in~\cite{strasser:berberich:allgower:2023b} to be as large as possible, where we use the design parameters $Q_z=-I$, $S_z=0$, $R_z=c$ and maximize the volume of the ellipsoidal RoA $\|x\|^2\leq c$ for each combination of $c_x$, $c_u$.
Fig.~\ref{fig:numerics-bilinear} shows the feasibility of the proposed SOS-based controller design and the approach in~\cite{strasser:berberich:allgower:2023b}. 
We emphasize that the latter has only local stability guarantees, whereas the proposed SOS-based controller design ensures \emph{global} stability of the system for all perturbations $r$ satisfying the bound~\eqref{eq:proportional-bound-bilinear} with even larger $c_x$, $c_u$ and, therefore, outperforms the existing approach. 
The computation times for different choices of $\alpha$ are summarized in Table~\ref{tab:numerics-bilinear-computation-time} exemplarily for $c_x=c_u=0.1$.
Note that a larger denominator degree $2\alpha$ yields an increased feasibility for larger constants $c_x$, $c_u$ but results in larger computation times, showing the trade-off between conservatism and the computational complexity.

\subsection{Nonlinear system: Inverted pendulum}\label{sec:numerical-example-nonlinear}
Next, we apply Corollary~\ref{cor:controller-design-nonlinear} to control an unknown nonlinear system. 
We consider the well-studied nonlinear inverted pendulum (see~\cite{martin:schon:allgower:2023b} and the reference therein) with the continuous-time dynamics
\begin{align*}
    \dot{x}_1 
    &= x_2,
    \\
    \dot{x}_2 
    &= \frac{g}{l}\sin(x_1) - \frac{b}{ml^2}x_2 + \frac{1}{ml^2} u.
\end{align*}
For the simulation, we choose $m=1$, $l=1$, $b=0.5$, $g=9.81$, but assume the system dynamics to be unknown except for a dataset of input-state measurements as in~\eqref{eq:dataset}. 
To this end, for each constant input $u\equiv \bar{u}\in\{0,1\}$ we collect $d=200$ data pairs $(x,x_+)$ with $x$ i.i.d. sampled from $\bbX=[-\pi,\pi]^2$.  
Then, we define the lifting function $
    \Phi(x) = \begin{bmatrix} x_1 & x_2 & \sin(x_1) \end{bmatrix}^\top
$
and apply SafEDMD to generate a discrete-time bilinear surrogate model by solving the linear regression problem~\eqref{eq:EDMD-linear-regression}.
\begin{figure}[tb]
    \centering
    \input{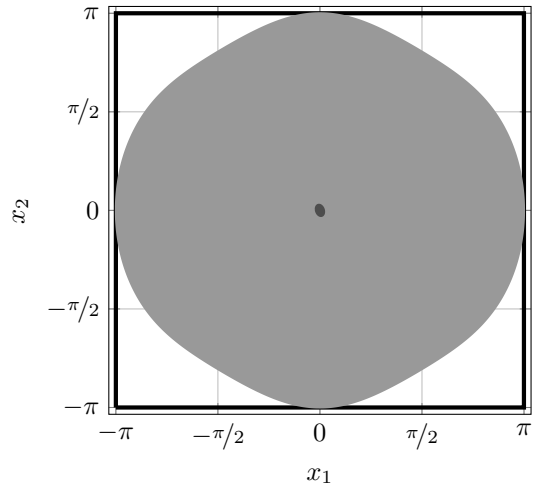}
    \caption{Compact sampling region $\bbX$~\eqref{plot:nonlinear-compactX} and guaranteed RoAs of the SOS-based~\eqref{plot:nonlinear-SOS} and LMI-based controllers~\eqref{plot:nonlinear-LMI}.}
    \label{fig:numerics-nonlinear}
\end{figure}
For the residual $r$, we consider Assumption~\ref{ass:proportional-bound} with the constants $c_x=\SI{1e-2}{}$ and $c_u=\SI{1e-3}{}$ for the proportional bound~\eqref{eq:proportional-bound-nonlinear}.
Then, we follow Corollary~\ref{cor:controller-design-nonlinear} to design an SOS-based rational controller in the lifted state $\Phi(x)$ stabilizing the underlying nonlinear system.
For the controller design, we choose the denominator by including all monomials in $\bbR[\Phi(x),2]$, i.e.,
\begin{equation*}
    u_\dd
    = 1 + x_1^2 + x_1x_2 + x_2^2 + x_1\sin(x_1) + x_2\sin(x_1) + \sin(x_1)^2,
\end{equation*}
where $u_\dd$ can be easily verified to be strictly SOS, i.e., $u_\dd\in\SOS_+[\Phi(x),2]$.
Fig.~\ref{fig:numerics-nonlinear} depicts the guaranteed RoA of the proposed SOS-based controller design for the unknown nonlinear system, where we maximized the volume of the Lyapunov sublevel set $\Omega$ in the SOS program. 

Further, we emphasize that most of the existing methods lack closed-loop guarantees for the underlying nonlinear system.
We compare our method to~\cite{strasser:schaller:worthmann:berberich:allgower:2024b} which in turn was compared extensively to Koopman-based control methods that do not provide closed-loop guarantees (like EDMDc~\cite{brunton:brunton:proctor:kutz:2016}) and therefore fail.
In particular, we compare the resulting guaranteed RoA to the RoA of the Koopman-based controller design~\cite[Cor.~4.2]{strasser:schaller:worthmann:berberich:allgower:2024b}, which relies on solving LMIs by over-approximating the bilinearity of the surrogate model.
We note that the LMI-based controller design is not feasible for the chosen constants $c_x$, $c_u$ with the data length $d$ and, thus, cannot guarantee closed-loop stability of the nonlinear system based on the collected dataset. 
Instead, we solve the controller design of~\cite{strasser:schaller:worthmann:berberich:allgower:2024b} with the smaller constants $c_x=\SI{2e-3}{}$, $c_u=\SI{2e-4}{}$, and over-approximate the bilinearity within the ellipsoidal region $\|\Phi(x)\|^2\leq 0.01$.
As seen in Fig.~\ref{fig:numerics-nonlinear}, the LMI-based controller design yields a significantly smaller RoA compared to the proposed SOS-based controller design. 
Thus, the proposed SOS-based controller design outperforms the existing approach by providing a guaranteed stabilizing controller for the unknown nonlinear system within a significantly larger RoA and based on a smaller dataset, showing the improved robustness w.r.t. the uncertainty in the bilinear surrogate model.
%
\section{Conclusion}\label{sec:conclusion}
In this paper, we presented a novel approach for designing controllers for unknown nonlinear systems using the Koopman operator.
In particular, we first used SafEDMD to derive a data-driven bilinear surrogate model with error certificates.
Then, we developed a novel SOS approach for designing rational controllers with stability guarantees for the perturbed bilinear surrogate model and, thus, the true nonlinear system.
The derived SOS condition is solvable via semidefinite programming, leading to a convex controller design for unknown nonlinear systems based on finite data.
Compared to existing approaches which rely on over-approximating the bilinearity, the proposed method reduces conservatism and enlarges the RoA by explicitly accounting for the bilinearity in the controller design.
Further, numerical examples demonstrated the reduced conservatism and improved data efficiency of the method.

\bibliographystyle{IEEEtran}
\bibliography{literature}

\end{document}